\documentclass{IEEEtran4PSCC}

\ifCLASSINFOpdf
\else
\fi

\usepackage[cmex10]{amsmath}
\usepackage{color}
\usepackage{epsfig}
\usepackage{subfig}
\usepackage{mathrsfs}
\usepackage{booktabs}
\usepackage{graphicx}
\usepackage{epstopdf}
\usepackage{multirow}
\usepackage{amssymb}
\usepackage{amsmath}
\usepackage{amsthm}
\usepackage{fancyhdr}
\usepackage{tabularx}
\usepackage{xfrac}

\newtheorem{theorem}{Theorem}[section]

\newtheorem{prop}{Proposition}[section]

\DeclareMathOperator*{\argmin}{argmin}

\hyphenation{op-tical net-works semi-conduc-tor}

\begin{document}

\title{A Robust Optimization Approach for Terminating the Cascading Failure of Power Systems}

\author{
\IEEEauthorblockN{Chao Zhai} 
\IEEEauthorblockA{School of Electrical and Electronic Engineering \\
Nanyang Technological University \\
50 Nanyang Avenue, Singapore 639798 \\
Email: zhaichao@amss.ac.cn}
}

\maketitle

\begin{abstract}
Due to uncertainties and the complicated intrinsic dynamics of power systems, it is difficult to predict the cascading failure
paths once the cascades occur. This makes it challenging to achieve the effective power system protection against cascading blackouts.
By incorporating uncertainties and stochastic factors of the cascades, a Markov chain model is developed in this paper to predict the cascading failure paths of power systems. The transition matrix of Markov chain is dependent on the probability of branch outage
caused by overloads or stochastic factors. Moreover, a robust optimization formulation is proposed to deal with the cascading blackouts by shedding load optimally for multiple cascading failure paths with relatively high probabilities. Essentially, it can be converted to the best approximation problem. Thus, an efficient numerical solver based on Dykstra's algorithm is employed to deal with the robust optimization problem. In theory, we provide a lower bound for the probability of preventing the cascading blackouts of power systems. Finally, the proposed approach for power system protection is verified by a case study of IEEE 118 bus system.
\end{abstract}

\begin{IEEEkeywords}
Cascading blackouts, Markov chain, power system protection, robust optimization, uncertainties
\end{IEEEkeywords}


\section*{Nomenclature}
\begin{align*}
&\text{$P$}       &&\text{Transition matrix of Markov chain} \\
&\text{$p_{ij}$}  &&\text{Transition probability from State $i$ to State $j$} \\
&\text{$x^k$}     &&\text{Probability distribution vector over states } \\
&\text{~~}        &&\text{at the $k$-th cascading step} \\
&\text{$s^k$}     &&\text{Random vector describing the power system} \\
&\text{~~}        &&\text{state at the $k$-th cascading step} \\
&\text{$s_l^k$}   &&\text{Random variable describing the connection state}   \\
&\text{~~}        &&\text{of the $l$-th branch at the $k$-th cascading step} \\
&\text{$P_b^k$}   &&\text{Net active power vector injected to buses} \\
&\text{~~}        &&\text{at the $k$-th cascading step} \\
&\text{$P_e^k$}   &&\text{Power flow vector at the $k$-th cascading step} \\
&\text{$\lambda_{l}$} &&\text{Outage probability of the $l$-th branch} \\
&\text{$P_{X^{(i)}}(d)$} &&\text{Projection of the point $d$ onto the set $X^{(i)}$} \\
&\text{$m$}          &&\text{Number of buses in power systems} \\
&\text{$n$}          &&\text{Number of branches in power systems} 
\end{align*}
\begin{align*}
&\text{$h$}          &&\text{Number of cascading steps during the cascades} \\
&\text{$I_n$}        &&\text{Set of positive integers, i.e. $\{1, 2, ... n\}$} \\
&\text{$p_{over}$}   &&\text{Probability vector of branch outage due to} \\
&\text{~~}           &&\text{overloads} \\
&\text{$p_{hidden}$} &&\text{Probability vector of branch outage due to} \\
&\text{~~}           &&\text{hidden failures} \\
&\text{$p_{cont}$}   &&\text{Probability vector of branch outage due to} \\
&\text{~~}           &&\text{contingencies} \\
\end{align*}

\section{Introduction}
The past several decades have witnessed major blackouts of power systems in the world, such as the 1999 Southern Brazil Blackout, the 2003 US-Canada Blackout and the 2012 India Blackout, which have affected millions of people and caused huge economic losses \cite{mcl09}. According to technical reports on major power outages, power system blackouts normally go through five stages: precondition, initiating event, cascading events, final state and restoration \cite{lu06}. The successful prevention of cascading blackouts relies on the effective identification of disruptive initiating events and the accurate prediction of cascading failure paths. Here, a cascading failure path refers to the sequence of branch outages during the cascades, and it describes how the failure propagates throughout power systems. However, the uncertainties and interdependencies of diverse components make it difficult to predict the cascading failure paths once the cascade is triggered. As a result, it is a great challenge for protecting power systems against cascading blackouts.

The insights into the cascading sequences of power grids play an important role in designing an effective protection scheme to prevent cascading blackouts. The conventional protection of power systems largely relies on the protective relays for directly severing overloaded branches or the predetermined protection schemes to isolate the faulted components in power systems during emergencies,
such as under-voltage load shedding \cite{arn97}, under-frequency load shedding \cite{sur67,sta74} and the special protection scheme (SPS) \cite{hew04,and96}.  According to the definition from the North American Electric Reliability Council (NERC), an SPS is designed to detect abnormal system conditions and take preplanned, corrective action to provide acceptable system performance. It normally results in the decomposition of the whole system into several islands in order to isolate the faulted components \cite{sps98}.
Actually, each power system has its own emergency control practices and operating procedures, which are dependent on the different operating conditions, characteristics of the system \cite{beg05}. This implies that the operating procedure for each power system is unique. Although the effectiveness of conventional protection methods has been demonstrated in practice, they are actually subject to multiple limitations in terms of compatibility and universality for various contingencies that could trigger the cascades. For instance, the SPS is designed based on a number of specific scenarios, characterized by certain abnormal stresses, following which the system will collapse. A fault which is not contained in such a list of scenarios, however, may not be covered by the SPS. For this reason, it would be desirable to develop a disturbance-related real-time protection scheme that is able to prevent the degradation of power systems during cascades. The introduction of phasor measurement units (PMUs) in power systems allows to develop the real-time protection scheme by estimating the system states for emergency control \cite{nuq05}.
For example, \cite{zhai19} proposes a model predictive approach to prevent the cascading blackout of power systems by predicting the cascading failure path and taking the corresponding remedial actions in time. Nevertheless, the remedial action is computed based on a deterministic cascading failure path without allowing for the uncertainties and stochastic factors such as hidden failure, contingencies and malfunction of protective relays.

To address this problem, we propose a Markov chain model to describe the effect of uncertainties and stochastic factors on the cascading failure paths in this paper. The transition matrix of Markov chain is dependent on the probability of branch outage that is caused by overloads or other relevant stochastic factors. This Markov chain model allows to predict multiple cascading failure paths of relatively high probability. Compared with existing probabilistic models of power system cascades \cite{wang12,rah14,dob05,yao16}, our proposed model contributes to the effective selection of the most possible cascading failure paths with relatively low computation burdens. Given predicted failure paths, further blackouts can be prevented by executing an effective load shedding scheme. Collating multiple cascading failure paths that are most likely to occur, a robust optimization problem is formulated to compute the minimum amount of loads to shed. Geometrically, each considered cascading failure path is associated with a convex set, and the intersection of such convex domains constitute the search space to find the optimal solution of the robust optimization problem. In practice, the proposed Markov chain model and protection scheme can be adopted in the wide-area measurement and protection system to enhance the capability of preventing the cascading blackout (see Fig.\ref{fig:process}).

The main contributions of this work are summarized as follows:
\begin{enumerate}
  \item Propose a Markov chain model to predict the cascading failure paths with the consideration of uncertainties and stochastic factors.
  \item Develop a robust optimization formulation for the emergency protection to prevent cascading blackouts.
  \item Design an efficient numerical solver for the robust optimization problem using Dykstra's algorithm.
\end{enumerate}

The remainder of this paper is organized as follows: Section \ref{sec:pred} presents a Markov chain model for the prediction of cascading failure paths. Section \ref{sec:rob} provides the robust optimization formulation for the prevention of cascading blackouts, followed by an efficient numerical solver in Section \ref{sec:num}. Numerical simulations are conducted to validate the protection approach in Section \ref{sec:sim}. Finally, we draw a conclusion and discuss the future work in Section \ref{sec:con}.

\section{Prediction of Cascading Failure Paths}\label{sec:pred}
\begin{figure}
\scalebox{0.075}[0.075]{\includegraphics{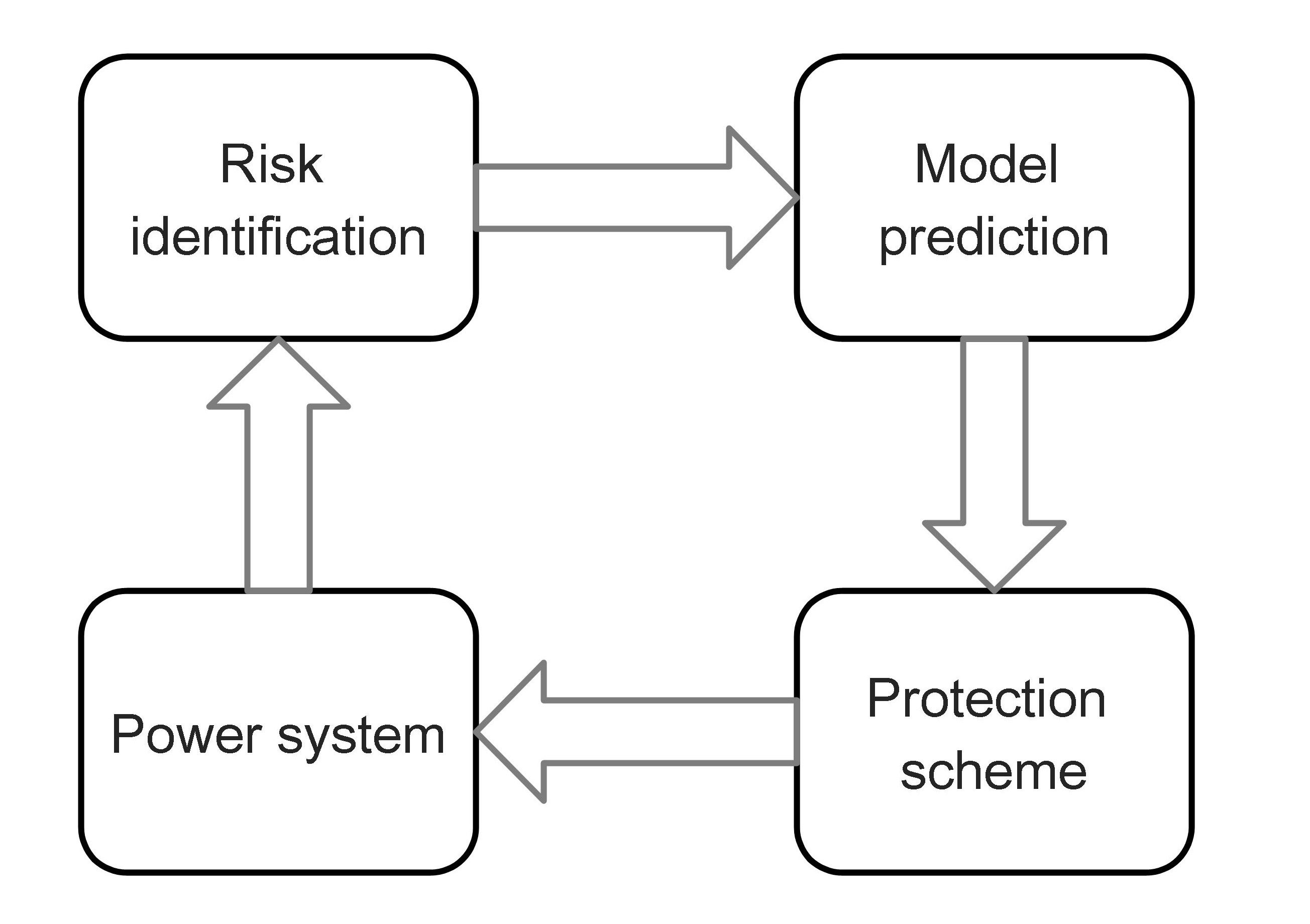}}\centering
\caption{\label{fig:process} Information flow in the wide-area measurement, protection and control system. The block of risk identification is used to detect the contingency (e.g. branch outage). And the block of model prediction produces the possible cascading failure paths. The block of protection scheme implements the protective actions (e.g. load shedding, generation control).}
\end{figure}

This section presents a Markov chain model for the prediction of cascading failure paths and an approach to handling the curse of dimensionality. Without loss of generality, we consider a power system with $n$ branches, and each branch has two connection states: ``on" and ``off", represented by the binary bits $1$ and $0$, respectively. Thus, the state of power system on branch connection can be described by a string of $n$ binary bits (see Fig.~\ref{que}).
To avoid the vagueness, it is necessary to clarify the concept of cascading step. To be precise, a cascading step refers to one topological change (e.g. one branch outage) of power networks due to branch overloads or stochastic factors such as hidden failure of relays, human errors and the weather \cite{zhai19}.

\subsection{Markov chain model}
Suppose that the probability of moving to the next power system state only depends on the present state during the cascades. Given the probability of initial states, a sequence of random variables can be generated to describe the cascading process. Then a state-dependent transition matrix $P=(p_{ij})\in R^{2^n\times2^n}$ can be created to describe the random process, and each element in $P$ is given by
$$
p_{ij}=\texttt{Prob}(s^{k+1}=j|s^k=i), \quad i,~j\in \mathcal{B}^n, \quad \mathcal{B}=\{0,1\}
$$
where $\mathcal{B}^n$ denotes the state set of Markov chain and $s^k=(s_1^k,s_2^k,...,s_n^k)\in\mathcal{B}^n$ is a vector of $n$ binary elements to characterize the branch connection (i.e. ``on" or ``off") at the $k$-th cascading step. Specifically, $p_{ij}$ is the transition probability of moving from the state $i$ to the state $j$ in one cascading step. Let $p_{hidden}$ and $p_{cont}$ denote the probabilities of branch outage due to hidden failures and the contingency, respectively. In addition, $p_{over}$ refers to the outage probability due to branch overloads. Suppose the above factors that cause branch outages are independent. And thus the outage probability of the $l$-th branch is given by
$$
\lambda_{l}=1-\left(1-p_{over,l}\right)\cdot\left(1-p_{hidden,l}\right)\cdot\left(1-p_{cont,l}\right)
$$
where $p_{over,l}$, $l\in I_n=\{1,2,...,n\}$ represents the probability of branch outage, which depends on the current or power flow on the $l$-th branch. Actually, the power flow or current on each branch is independent on the connectivity of power networks. The following theoretical results reveal the relationship between the elements of transition matrix and the probability of branch outage.

\begin{figure}
\scalebox{0.065}[0.065]{\includegraphics{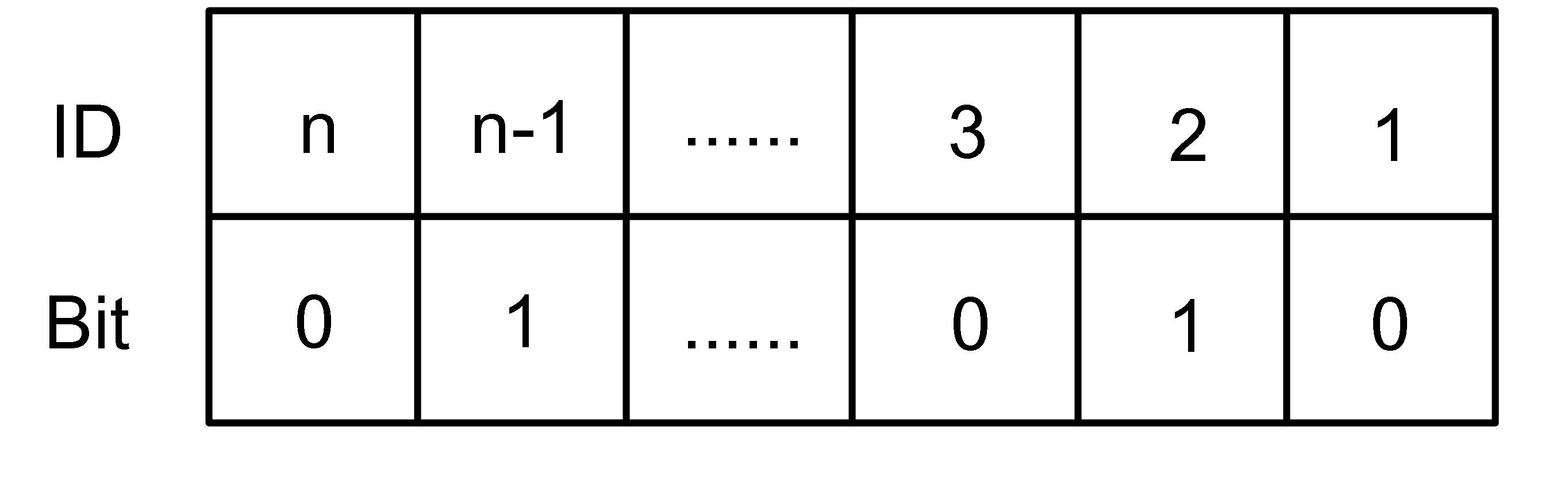}}\centering
\caption{\label{que} Branch ID and its state in a string of $n$ binary bits. The head bit on the left side stores the state of Branch $n$, and the tail bit on the right side records the state of Branch $1$.}
\end{figure}

\begin{prop}
The elements in the transition matrix $P$ satisfy
\begin{equation}\label{pij}
p_{ij}=
    \left\{
      \begin{array}{ll}
        0, & \hbox{$i\vee j\neq i$;} \\
        \prod_{l\in\Omega(i,j)}\lambda_{l}\cdot\prod_{l\in\Theta(i,j)}(1-\lambda_{l}), & \hbox{$i\vee j=i$,}
      \end{array}
    \right.
\end{equation}
where the symbol $\vee$ denotes the bitwise logical disjunction. In addition, $\Omega(i,j)$ and $\Theta(i,j)$ are two sets of branch ID as follows:
$$
\Omega(i,j)=\{l\in I_n| s^k=i, s^{k+1}=j, s_l^k=1, s_l^{k+1}=0\}
$$
and
$$
\Theta(i,j)=\{l\in I_n| s^k=i, s^{k+1}=j, s_l^k=1, s_l^{k+1}=1\}.
$$
\end{prop}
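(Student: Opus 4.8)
The plan is to read the transition probability directly off the two modeling assumptions built into the cascade: first, that a branch which is ``off'' cannot spontaneously reconnect, so outages are monotone; and second, that conditioned on the current state $s^k=i$ the outage events of distinct branches are independent, with the branch-$l$ event having probability $\lambda_l$. Once these are in place, the whole statement reduces to bookkeeping over the $n$ bit positions. Throughout I write $i_l$ and $j_l$ for the $l$-th bits of $i$ and $j$, so that $i_l=s_l^k$ and $j_l=s_l^{k+1}$.

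First I would dispose of the case $i\vee j\neq i$. I would observe that $i\vee j=i$ holds exactly when the set of ``on'' branches of $j$ is contained in that of $i$: checking bitwise, a position with $i_l=0$ forces $j_l=0$ (since $i_l\vee j_l=j_l$ must equal $i_l=0$), while a position with $i_l=1$ imposes no constraint. Hence $i\vee j\neq i$ means some branch $l$ has $s_l^k=0$ but $s_l^{k+1}=1$, i.e. a branch switches from ``off'' to ``on'' in a single cascading step. By the monotonicity of cascading outages this event is impossible, so $p_{ij}=\texttt{Prob}(s^{k+1}=j\mid s^k=i)=0$, which is the first line of \eqref{pij}.

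For the case $i\vee j=i$ I would partition the index set $I_n$ into three disjoint classes according to the pair $(s_l^k,s_l^{k+1})$. The branches with $s_l^k=0$ also satisfy $s_l^{k+1}=0$ by the containment just established, and so stay ``off'' with conditional probability $1$; these lie in neither $\Omega$ nor $\Theta$. The branches with $s_l^k=1$ split into those that fail, collected in $\Omega(i,j)$, each switching from ``on'' to ``off'' with probability $\lambda_l$, and those that survive, collected in $\Theta(i,j)$, each remaining ``on'' with the complementary probability $1-\lambda_l$. These three classes are pairwise disjoint and exhaust $I_n$, since $\Omega(i,j)\cup\Theta(i,j)=\{l\in I_n\mid s_l^k=1\}$.

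Finally I would invoke the conditional independence of the per-branch outage events given $s^k=i$ to write the joint transition probability as the product of the single-branch probabilities over all of $I_n$; the trivial factors $1$ from the branches with $s_l^k=0$ drop out, leaving exactly $\prod_{l\in\Omega(i,j)}\lambda_l\cdot\prod_{l\in\Theta(i,j)}(1-\lambda_l)$, the second line of \eqref{pij}. The step requiring the most care is the justification of this product form: one must be explicit that the $\lambda_l$ are functions of the current state $i$ alone (the power flows, hence each $p_{over,l}$, are fixed once the topology $i$ is given) and that the independence hypothesis posited for the three outage causes transfers to independence of the branch outage events themselves, so that the conditional joint law genuinely factorizes.
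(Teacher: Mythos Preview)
Your proposal is correct and follows essentially the same approach as the paper: both handle $i\vee j\neq i$ by the no-reconnection assumption and then, for $i\vee j=i$, factor the transition probability over branches using the independence of outage events. Your version is somewhat more explicit---you spell out the bitwise characterization of $i\vee j=i$, isolate the third class of already-off branches as contributing trivial factors, and flag that $\lambda_l$ depends only on the current state---but the underlying argument is the same.
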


\begin{proof}
The condition $i\vee j\neq i$ implies that there exist branches, whose connection status turns to ``$1$" at the ($k+1$)-th cascading step from ``$0$" at the $k$-th cascading step. This is in contradiction with our assumption that the branch can not be reconnected any longer once it is severed. And thus we have $p_{ij}=0$. The condition $i\vee j=i$ characterizes the normal cascading process of power system. The set $\Omega(i,j)$ includes the ID numbers of severed branches in the state shift from $i$ to $j$, while the set $\Theta(i,j)$ keeps those of connected branches. Since the events of branch outage are independent, the probability of severing the branches in $\Omega(i,j)$ is $\prod_{l\in\Omega(i,j)}\lambda_{l}$ and the probability of keeping the connected branches in $\Theta(i,j)$ is
$$
\prod_{l\in\Theta(i,j)}(1-\lambda_{l}).
$$
It follows that the transition probability of moving from the state $i$ to the state $j$ is
$$
\prod_{l\in\Omega(i,j)}\lambda_{l}\cdot\prod_{l\in\Theta(i,j)}(1-\lambda_{l}),
$$
which completes the  proof.
\end{proof}
After the transition matrix $P$ is computed according to (\ref{pij}), it becomes feasible to predict the cascading failure paths of power systems (see Fig. \ref{path}). The probability distribution over states of power system on branch connection evolves as follows
$$
x^{k+1}=x^kP, \quad k\in\mathbb{N},
$$
where $\mathbb{N}$ denotes the set of nonnegative integers, and $x^k$ refers to the $2^n$-dimensional probability vector of power system states.
This allows us to obtain the probability distribution over power system states at the $k$-th cascading step
$$
x^{k}=x^0\underbrace{P\cdot P\cdot\cdot\cdot P}_{k}=x^0P^k,
$$
where $x^0$ denotes the initial probability distribution over the states of power systems, and it can be provided by the identification algorithm of branch outage in the wide-area monitoring system (WAMS) \cite{zhao14}. In practice, the transition matrix $P$ and its $k$-th power $P^k$ can be computed and recorded in advance to save time and reduce the computation burden for the online protection.

\begin{figure}
\scalebox{0.06}[0.06]{\includegraphics{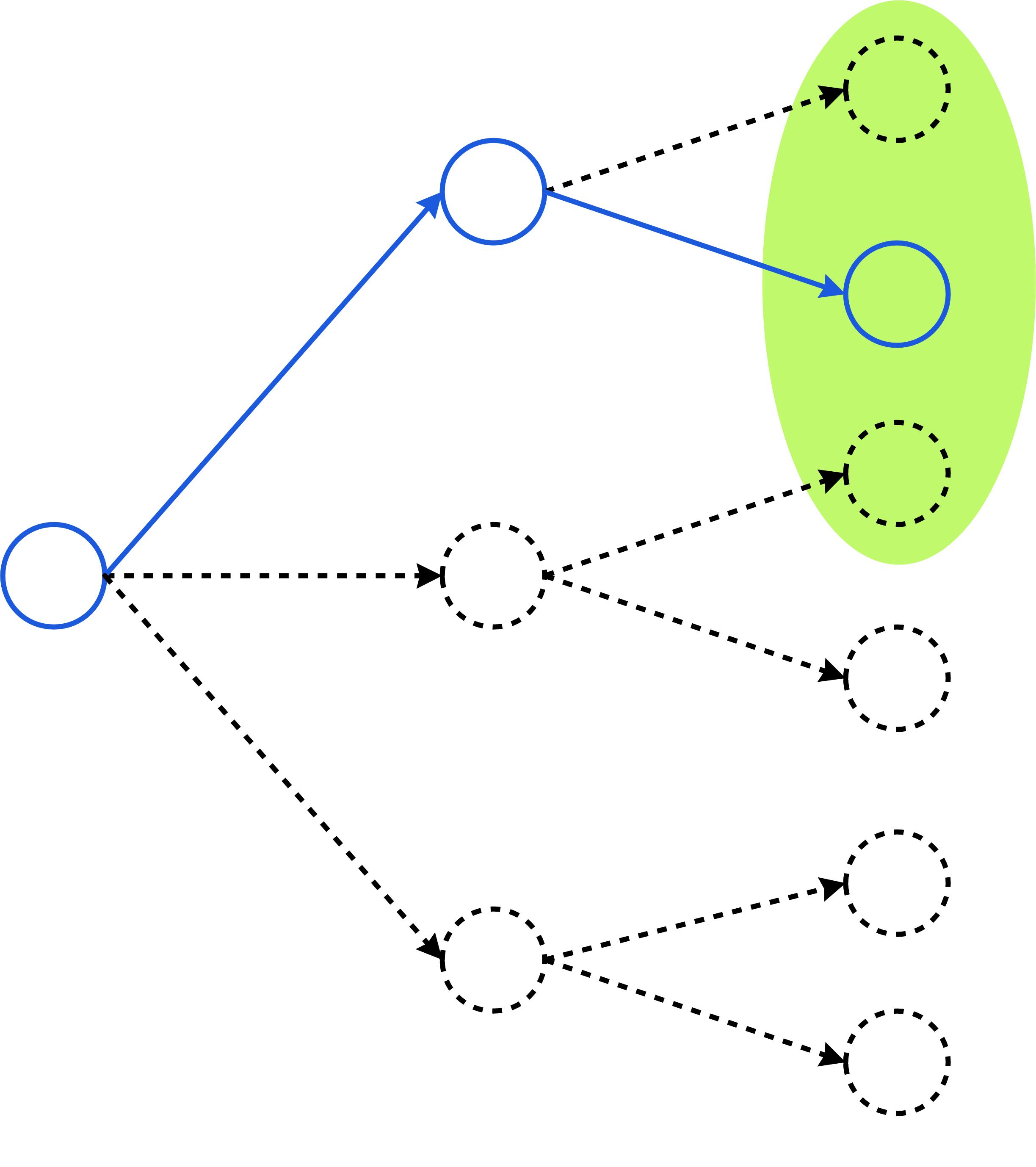}}\centering
\caption{\label{path} The predicted and actual cascading failure paths. The blue balls with solid boundaries denote the actual power system state during the cascades, while the dashed ones refer to the possible states predicted by the Markov chain model. The arrows indicate the shift of power system states. The solutions to robust optimization problem are implemented to safeguard power system states in the green ellipse against further cascades.}
\end{figure}

\subsection{Dimensionality Reduction}
As is known, the dimension of transition matrix $P$ increases exponentially with the size of power networks (i.e. number of branches). To solve the curse of dimensionality, a simplified model of power system cascades is proposed by ruling out the states with extremely low probabilities. With this simplified model, it becomes workable to predict the cascading failure paths and compute the probability of power system topologies at each cascading step.

For a power system with $n$ branches, the size of transition matrix $P$ goes up to $2^n\times2^n$. In practice, when $n$ is larger than $20$, the cost of computing $P$ becomes extremely high. Therefore, it is necessary to develop an efficient approach to estimating the probability distribution over states of power system and avoiding the curse of dimensionality. Since plenty of states occur with extremely small probabilities during the cascades, we only consider the states in the uncertainty set $\mathcal{D}_{\epsilon}$, where the elements or states satisfy a certain condition (i.e. their probabilities are larger than a certain threshold $\epsilon$). The mathematical expression of $\mathcal{D}_{\epsilon}$ is given by
\begin{equation}\label{setd}
    \mathcal{D}_{\epsilon}=\{s\in\mathcal{B}^n~|~x^k_{\nu(s)}>\epsilon,~\forall k\in I_h \},
\end{equation}
where $I_h=\{1,2,...,h\}$, and $x^k_{\nu(s)}$ denotes the probability that the cascades evolve to the state $s$ at the $k$-th cascading step. The state $s$ is described by a string of binary bits and $\nu(s)$ transforms $s$ to a decimal sequence number that ranks the state $s$ in the state space $\mathcal{B}^n$. For example, $\nu(s)$ can be defined as $\nu(s)=B2D(s)+1$, where $B2D(s)$ denotes the conversion from a binary number to an equivalent decimal number. In this way, the computation burden can be significantly relieved with the guaranteed probability of evolving into some states at the certain cascading step. For the $2^n$-dimensional row vector $x$, a vector function $\Gamma_{\epsilon}(x)$ is introduced to reset $x$ as follows
\begin{equation}\label{clfun}
    \Gamma_{\epsilon}(x)=x\cdot\texttt{diag}\left(\textbf{1}_{\epsilon}(x_1),\textbf{1}_{\epsilon}(x_2),...,\textbf{1}_{\epsilon}(x_{2^n})\right),
\end{equation}
where the operation $\texttt{diag}(x)$ obtains a square diagonal matrix with the elements of vector $x$ on the main diagonal, and $\textbf{1}_{\epsilon}(x_i)$ is the indicator function defined by
$$
\textbf{1}_{\epsilon}(x_i)=\left\{
                           \begin{array}{ll}
                             1, & \hbox{$x_i>\epsilon$;} \\
                             0, & \hbox{$x_i\leq\epsilon$.}
                           \end{array}
                         \right.
$$
where $1\leq i\leq2^n$. Thus, an iteration equation is established to estimate the probability distribution with relatively low computation burdens as follows
\begin{equation}\label{iter}
    \left\{
       \begin{array}{ll}
         \hat{x}^{0}=x^{0}, \\
         \hat{x}^{k+1}=\Gamma_{\epsilon}(\hat{x}^kP), & \hbox{$k\in\mathbb{N}$.}
       \end{array}
     \right.
\end{equation}
By using the above iteration equation, it is feasible to estimate the probability of power system states at the $k$-th cascading step.
\begin{prop}\label{prop:lower_bound}
With the iteration equation (\ref{iter}), it holds that
\begin{equation}\label{est}
\texttt{Prob}(s^k\in\mathcal{D}_{\epsilon})\geq\|\hat{x}^k\|_1
\end{equation}
where $\|\cdot\|_1$ denotes the $\textit{l}_1$ norm, and $s^k$ refers to the power system state at the $k$-th cascading step.
\end{prop}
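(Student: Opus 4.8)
The plan is to rewrite both sides as sums over state indices and reduce the inequality to two elementary facts about the truncated iteration \eqref{iter}: a componentwise domination $\hat{x}^k\le x^k$ and a support condition placing every surviving state of $\hat{x}^k$ inside $\mathcal{D}_{\epsilon}$. First I would note that, since $x^k=x^0P^k$ is exactly the distribution of the state $s^k$ (so $\texttt{Prob}(s^k=s)=x^k_{\nu(s)}$), the right-hand event probability is
$$
\texttt{Prob}(s^k\in\mathcal{D}_{\epsilon})=\sum_{s\in\mathcal{D}_{\epsilon}} x^{k}_{\nu(s)},
$$
while $\|\hat{x}^k\|_1=\sum_i (\hat{x}^k)_i$ because every iterate is nonnegative. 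Thus it suffices to dominate the latter sum by the former.

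Next I would prove $\hat{x}^k\le x^k$ componentwise by induction on $k$. The base case is immediate from $\hat{x}^0=x^0$. For the inductive step, since $P$ has nonnegative entries and $\hat{x}^k\ge 0$, multiplication by $P$ is monotone and gives $\hat{x}^kP\le x^kP=x^{k+1}$; applying $\Gamma_{\epsilon}$ in \eqref{clfun} can only zero out entries, so $\hat{x}^{k+1}=\Gamma_{\epsilon}(\hat{x}^kP)\le \hat{x}^kP\le x^{k+1}$. The same induction shows $\hat{x}^k\ge 0$, so all the sums above run over nonnegative terms.

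I would then establish the support condition. For $k\ge 1$ the vector $\hat{x}^k$ is an image of $\Gamma_{\epsilon}$, so by the definition of the indicator $\textbf{1}_{\epsilon}$ every nonzero component satisfies $(\hat{x}^k)_i>\epsilon$. Combined with the domination, $x^k_i\ge(\hat{x}^k)_i>\epsilon$ for each index $i$ in the support of $\hat{x}^k$, which by \eqref{setd} places the corresponding state in $\mathcal{D}_{\epsilon}$. Hence the support of $\hat{x}^k$ is contained in the index set of $\mathcal{D}_{\epsilon}$, and
$$
\|\hat{x}^k\|_1=\sum_{i\in\mathrm{supp}(\hat{x}^k)}(\hat{x}^k)_i\le\sum_{i\in\mathrm{supp}(\hat{x}^k)}x^k_i\le\sum_{s\in\mathcal{D}_{\epsilon}}x^k_{\nu(s)}=\texttt{Prob}(s^k\in\mathcal{D}_{\epsilon}),
$$
where the middle step uses $\hat{x}^k\le x^k$ and the last uses $x^k\ge 0$ together with the support inclusion.

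The main obstacle I anticipate is reconciling this support condition with the quantifier in the definition \eqref{setd}: the truncation in $\Gamma_{\epsilon}$ controls only the value $(\hat{x}^k)_i$ at the \emph{current} step $k$, whereas membership in $\mathcal{D}_{\epsilon}$ as written requires $x^j_{\nu(s)}>\epsilon$ at every step $j\in I_h$. Making the argument airtight requires the domination bound $\hat{x}^k\le x^k$ to transfer the current-step threshold to the exact probability $x^k_i$, and care that $\mathcal{D}_{\epsilon}$ be read at the step consistent with the event $\{s^k\in\mathcal{D}_{\epsilon}\}$; once the support inclusion is secured, the remaining manipulations are the routine nonnegative-sum estimates shown above.
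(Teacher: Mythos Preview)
Your argument is essentially the paper's argument, carried out with more care. The paper's proof asserts $\hat{x}^k_i\le x^k_i$ (without the induction you supply), writes $\texttt{Prob}(s^k\in\mathcal{D}_\epsilon)=\|\Gamma_\epsilon(x^k)\|_1$, and then invokes the monotonicity of $\Gamma_\epsilon$ under the componentwise order to conclude $\|\Gamma_\epsilon(x^{k-1}P)\|_1\ge\|\Gamma_\epsilon(\hat{x}^{k-1}P)\|_1=\|\hat{x}^k\|_1$. Your support-inclusion step and the paper's monotonicity step are the same mechanism viewed two ways: if $0\le a\le b$ then every index surviving $\Gamma_\epsilon(a)$ also survives $\Gamma_\epsilon(b)$, which is exactly your ``$\mathrm{supp}(\hat{x}^k)\subseteq\{i:x^k_i>\epsilon\}$''.

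The obstacle you flag about the quantifier in \eqref{setd} is real and is not addressed in the paper either: the identity $\texttt{Prob}(s^k\in\mathcal{D}_\epsilon)=\|\Gamma_\epsilon(x^k)\|_1$ that the paper uses already reads $\mathcal{D}_\epsilon$ as the step-$k$ set $\{s:x^k_{\nu(s)}>\epsilon\}$ rather than the $\forall k\in I_h$ version literally written in \eqref{setd}. So your proof matches the paper's intended reading; under the literal $\forall k$ definition neither argument closes without an additional observation linking the step-$k$ threshold to all earlier steps.
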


\begin{proof}
It follows from the iteration equation (\ref{iter}) that
$$
\hat{x}^k=\Gamma_{\epsilon}(\hat{x}^{k-1}P).
$$
Considering that $\hat{x}^k_i\leq x^k_i$, $\forall i\in I_{2^n}$, we have
\begin{equation*}
\begin{split}
\texttt{Prob}(s^k\in\mathcal{D}_{\epsilon})&=\|\Gamma_{\epsilon}(x^k)\|_1 \\
&=\|\Gamma_{\epsilon}(x^{k-1}P)\|_1 \\
&\geq \|\Gamma_{\epsilon}(\hat{x}^{k-1}P)\|_1=\|\hat{x}^k\|_1,
\end{split}
\end{equation*}
which completes the proof.
\end{proof}
\section{Robust Optimization Formulation}\label{sec:rob}

In this section, we present the formulation of robust optimization for the online protection of power systems. By integrating the uncertainty of cascading failure paths with the change of branch admittance, a robust optimization problem is formulated as follows
\begin{equation}\label{opt}
\begin{split}
     &~~~~~\min_{P_b}\|P_b-P_b^0\| \\
     &s.t.~P_{e}=\texttt{diag}(\hat{Y}_p)A(A^T\texttt{diag}(\hat{Y}_p)A)^{-1^*}P_b \\
     &~~~~~\hat{Y}_p=\texttt{diag}(s)\cdot Y_p,~s\in\mathcal{D}_{\epsilon} \\
     &~~~~~P_{b,i}\in [\underline{P}_{b,i},\bar{P}_{b,i}],~i\in I_{m} \\
     &~~~~~P_{e,j}^2 \leq \sigma^2_j,~j\in I_n
\end{split}
\end{equation}
where $P_b^0$ refers to the original vector of injected power on buses before load shedding, and $P_b=(P_{b,1}, P_{b,2},...,P_{b,m})^T$ denotes the vector of injected power on buses after load shedding. $\underline{P}_{b,i}$ and $\bar{P}_{b,i}$ characterize the upper and lower bounds of the injected power on the $i$-th bus, respectively. $P_e=(P_{e,1},P_{e,2},...,P_{e,n})^T$ represents the vector of power flow on branches, and $\sigma_j$ specifies the threshold of power flow on the $j$-th branch.  $Y_p$ denotes the vector of branch susceptance. In addition, $A$ refers to the incidence matrix from branch to bus in power networks \cite{stag68}. The set $\mathcal{D}_{\epsilon}$ contains all the predicted power system states during the cascades. It is worth pointing out that the symbol $-1^{*}$ is defined as a pseudo-inverse to solve the DC power flow equation \cite{tcns19}.

The power system state $s$ determines the topology of power networks and thus affects the power flow distribution on branches.
The iteration equation (\ref{iter}) allows us to predict power system states and estimate the corresponding probabilities at a given cascading step. By implementing the solution to Problem (\ref{opt}), it is expected to prevent the cascades for the predicted power system states with a guaranteed probability.
\begin{prop}
Robust Optimization Problem (\ref{opt}) is equivalent to a convex optimization problem as follows
\begin{equation}\label{conp}
\min_{P_b\in X}\|P_b-P_b^0\|
\end{equation}
with $X=\bigcap_{s\in\mathcal{D}_{\epsilon}}X_s$ and
$$
    X_s=\left\{
  \begin{array}{c|c}
      \null & P_{e}=\texttt{diag}(\hat{Y}_p)A(A^T\texttt{diag}(\hat{Y}_p)A)^{-1^*}P_b \\
       P_b  & \hat{Y}_p=\texttt{diag}(s)\cdot Y_p,~P_{b,i}\in [\underline{P}_{b,i},\bar{P}_{b,i}] \\
      \null & P_{e,j}^2 \leq \sigma^2_j, ~i\in I_{m},~j\in I_n  \\
  \end{array}
\right\}.
$$
\end{prop}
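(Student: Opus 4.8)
The plan is to prove the statement in two stages: first reinterpret the robust feasibility requirement in Problem~(\ref{opt}) as membership in the intersection $X=\bigcap_{s\in\mathcal{D}_{\epsilon}}X_s$, and then verify that the reformulated problem is convex by showing that each $X_s$ is convex (in fact polyhedral) and that the objective is a convex function. Concretely, since the realized cascading failure path is uncertain, the load-shedding vector $P_b$ must keep every branch flow within its threshold no matter which predicted state $s\in\mathcal{D}_{\epsilon}$ actually materializes; hence $P_b$ is robustly feasible if and only if it satisfies the power-flow and box constraints simultaneously for \emph{every} $s\in\mathcal{D}_{\epsilon}$, which is precisely the statement that $P_b\in X_s$ for all such $s$, i.e. $P_b\in X$. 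Because $\mathcal{D}_{\epsilon}\subseteq\mathcal{B}^n$ is a finite set, this is a finite intersection, and since the objective $\|P_b-P_b^0\|$ is unaffected by this reformulation, the two problems share the same feasible region and the same minimizers.

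Next I would fix an arbitrary $s\in\mathcal{D}_{\epsilon}$ and argue that $X_s$ is convex. For a fixed $s$ the vector $\hat{Y}_p=\texttt{diag}(s)\cdot Y_p$ is constant, so the matrix $M_s:=\texttt{diag}(\hat{Y}_p)A(A^T\texttt{diag}(\hat{Y}_p)A)^{-1^*}$ is a constant matrix and the equality constraint reduces to the affine relation $P_e=M_sP_b$. Substituting this relation eliminates $P_e$ and leaves constraints only in $P_b$: the bounds $P_{b,i}\in[\underline{P}_{b,i},\bar{P}_{b,i}]$ define a box, while each flow constraint $P_{e,j}^2\leq\sigma_j^2$ becomes $\bigl((M_sP_b)_j\bigr)^2\leq\sigma_j^2$, equivalently the pair of linear inequalities $-\sigma_j\leq(M_sP_b)_j\leq\sigma_j$. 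Thus $X_s$ is the intersection of finitely many half-spaces with a box, i.e. a polyhedron, and is therefore convex.

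Finally, since each $X_s$ is convex and an intersection of convex sets is convex, the feasible region $X=\bigcap_{s\in\mathcal{D}_{\epsilon}}X_s$ is convex. The objective $\|P_b-P_b^0\|$ is the composition of a norm with the affine map $P_b\mapsto P_b-P_b^0$ and is hence convex, so minimizing it over the convex set $X$ makes Problem~(\ref{conp}) a convex optimization problem, which completes the argument.

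I expect the main obstacle to be the convexity of the flow constraints: at first sight $P_{e,j}^2\leq\sigma_j^2$ is a nonconvex quadratic in the joint variable $(P_b,P_e)$, and the key observation is that for fixed $s$ the flow $P_{e,j}$ is an \emph{affine} function of $P_b$, so the constraint collapses to two linear inequalities (equivalently, a sublevel set of a convex quadratic), which is exactly what renders $X_s$ polyhedral. A secondary point deserving care is that although the pseudo-inverse $^{-1^*}$ is needed because $A^T\texttt{diag}(\hat{Y}_p)A$ is rank-deficient once branches are disconnected, for each fixed $s$ it yields a well-defined constant matrix and therefore does not disturb the affine dependence of $P_e$ on $P_b$.
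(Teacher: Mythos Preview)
Your proposal is correct and follows essentially the same approach as the paper: rewrite the robust constraints as the finite intersection $X=\bigcap_{s\in\mathcal{D}_{\epsilon}}X_s$, use the affine dependence of $P_e$ on $P_b$ for fixed $s$ to conclude each $X_s$ is convex, and invoke closure of convexity under intersection. Your argument is in fact more detailed than the paper's (you sharpen convexity of $X_s$ to polyhedrality via the linear inequalities $-\sigma_j\le(M_sP_b)_j\le\sigma_j$, and you explicitly verify convexity of the objective and address the pseudo-inverse), but the underlying strategy is the same.
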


\begin{proof}
The constraints of (\ref{opt}) can be described by the intersection of finite sets
$$
X=\bigcap_{s\in\mathcal{D}_{\epsilon}}X_s
$$
where $X_s$ is given by
\begin{equation*}
    X_s=\left\{
  \begin{array}{c|c}
      \null & P_{e}=\texttt{diag}(\hat{Y}_p)A(A^T\texttt{diag}(\hat{Y}_p)A)^{-1^*}P_b \\
       P_b  & \hat{Y}_p=\texttt{diag}(s)\cdot Y_p,~P_{b,i}\in [\underline{P}_{b,i},\bar{P}_{b,i}] \\
      \null & P_{e,j}^2 \leq \sigma^2_j,~i\in I_{m},~j\in I_n  \\
  \end{array}
\right\}.
\end{equation*}
For any $j\in {I}_n$,  the constraint function $P_{e,j}^2-\sigma^2_j$ is convex. For any $s\in \mathcal{D}_{\epsilon}$, $P_e$ is affine with respect to $P_b$, and thus $X_s$ is a convex set. Since the intersection of finite convex sets is a convex set, $X$ is a convex set as well. Thus, Robust Optimization Problem (\ref{opt}) is equivalent to the following convex optimization problem
$$
\min_{P_b\in X}\|P_b-P_b^0\|
$$
This completes the proof.
\end{proof}

\begin{figure}
\scalebox{0.07}[0.07]{\includegraphics{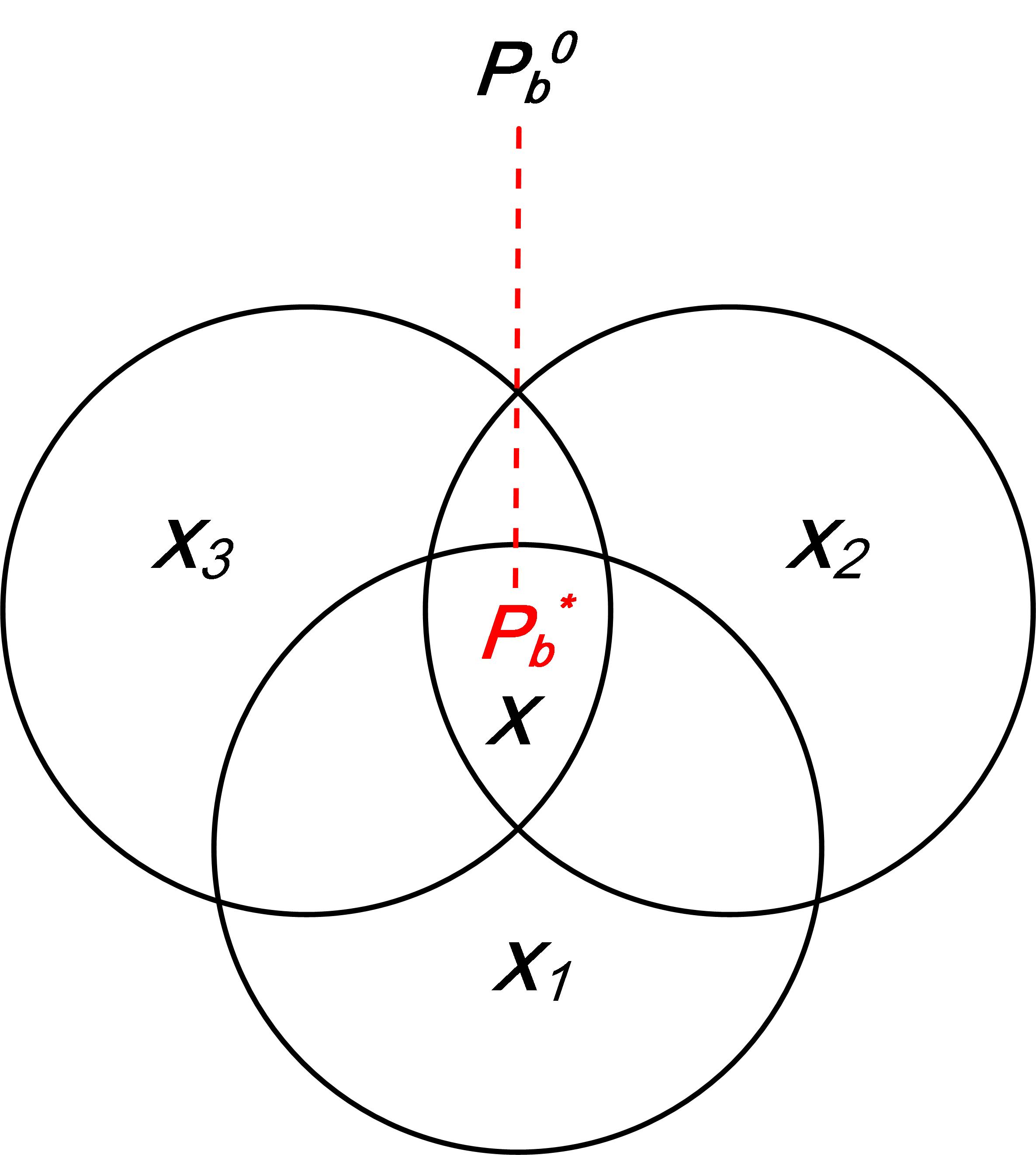}}\centering
\caption{\label{proj} Geometric interpretation of optimal solutions to the problem (\ref{opt}). For instance, there are three possible cascading failure pathes, which allows us to obtain three convex sets $X_1$, $X_2$ and $X_3$. The set $X$ denotes the intersection of $X_1$, $X_2$ and $X_3$, and the optimal solution $P^{*}_b$ to the problem (\ref{opt}) is obtained by the projection of $P_b^0$ in the set $X$.}
\end{figure}

The solutions to Robust Convex Optimization Problem (\ref{conp}) can be interpreted as the projection of a point onto the intersection of convex sets (see Fig. \ref{proj}). To determine the nonempty $X_s$, we introduce the following two convex sets on the vector of power flow
$$
\alpha=\{P_e|~P_{e,j}^2 \leq \sigma^2_j,~j\in I_n\}
$$
and
$$
\beta_s=\left\{
  \begin{array}{c|c}
      \null & \hat{Y}_p=\texttt{diag}(s)\cdot Y_p \\
       P_e  & P_{b,i}\in [\underline{P}_{b,i},\bar{P}_{b,i}],~i\in I_{m} \\
      \null & P_{e}=\texttt{diag}(\hat{Y}_p)A(A^T\texttt{diag}(\hat{Y}_p)A)^{-1^*}P_b \\
  \end{array}
\right\}.
$$
It follows that $X_s\neq\emptyset$ if and only if $\texttt{dist}(\alpha,\beta_s)=0$. Actually, the distance between two convex sets $\texttt{dist}(\alpha,\beta_s)$ can be obtained by solving the following convex optimization problem: $\min\|x-y\|$ subject to
$x\in\alpha$ and $y\in\beta_s$. The efficient numerical algorithms are available to compute the distance between two convex sets and check the non-null of their intersection \cite{lla00}. To verify that $X$ is a non-null set, we need to find a point in $X$. Actually, it is associated with the set intersection problem (SIP), and the method of alternating projections can be adopted to solve the SIP \cite{bau11}. Thus, we have the following theorem

\begin{theorem}\label{theo}
For the non-null set $X$, the solutions to Problem (\ref{opt}) can prevent the further cascades with the probability as least $\|\hat{x}^k\|_1$ at the $k$-th cascading step.
\end{theorem}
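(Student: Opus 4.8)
The plan is to combine the feasibility guarantee furnished by membership in the intersection $X$ with the probabilistic lower bound already secured in Proposition \ref{prop:lower_bound}. The central observation is that the event ``further cascades are prevented at step $k$'' \emph{contains} the event that the true system state $s^k$ happens to fall inside the predicted uncertainty set $\mathcal{D}_{\epsilon}$. Once this event inclusion is in place, the stated bound follows immediately from monotonicity of probability, so the proof reduces to justifying the inclusion.

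First I would record the feasibility step. Because $X$ is assumed non-null, the equivalent convex program (\ref{conp}) is the projection of $P_b^0$ onto a non-empty closed convex set and therefore admits a minimizer $P_b^{*}\in X$. Since $X=\bigcap_{s\in\mathcal{D}_{\epsilon}}X_s$, this single solution satisfies $P_b^{*}\in X_s$ for \emph{every} $s\in\mathcal{D}_{\epsilon}$ simultaneously. By the definition of $X_s$, membership forces the induced power flow to obey $P_{e,j}^2\leq\sigma_j^2$ for all $j\in I_n$ under the topology encoded by the state $s$; that is, no branch is overloaded. In the cascade model, overload is the deterministic mechanism through which outages propagate, so holding every branch within its flow threshold halts the overload-driven cascade for that particular topology.

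Next I would make the event inclusion explicit by conditioning on the actual state $s^k$ at the $k$-th cascading step. If $s^k\in\mathcal{D}_{\epsilon}$, then $s^k$ is among the topologies for which $P_b^{*}$ was constructed to be feasible, so implementing $P_b^{*}$ guarantees $P_{e,j}^2\leq\sigma_j^2$ on all branches and the cascade is prevented. Hence
\begin{equation*}
\{s^k\in\mathcal{D}_{\epsilon}\}\subseteq\{\text{further cascades are prevented at step }k\}.
\end{equation*}
Taking probabilities and invoking $\texttt{Prob}(s^k\in\mathcal{D}_{\epsilon})\geq\|\hat{x}^k\|_1$ from Proposition \ref{prop:lower_bound} yields
\begin{equation*}
\texttt{Prob}(\text{prevented at step }k)\geq\texttt{Prob}(s^k\in\mathcal{D}_{\epsilon})\geq\|\hat{x}^k\|_1,
\end{equation*}
which is precisely the claim.

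The routine parts—existence of the projection and the algebraic fact that $P_b^{*}\in X_s$ enforces the branch-flow limits—are immediate from the earlier results. The step that deserves the most care, and which I regard as the main obstacle, is the justification of the event inclusion: one must argue that satisfying the flow constraints for the true topology is genuinely \emph{sufficient} to arrest the overload-triggered portion of the cascade, and be explicit that the guarantee concerns prevention of this deterministic overload mechanism only, rather than the residual stochastic outages (hidden failures, contingencies) that the Markov model still admits with positive probability.
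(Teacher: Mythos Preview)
Your proposal is correct and follows essentially the same approach as the paper: invoke Proposition~\ref{prop:lower_bound} for the lower bound $\texttt{Prob}(s^k\in\mathcal{D}_{\epsilon})\geq\|\hat{x}^k\|_1$, and then observe that any $P_b^{*}\in X=\bigcap_{s\in\mathcal{D}_{\epsilon}}X_s$ is simultaneously feasible for every predicted topology, so the event inclusion yields the claim. Your write-up is in fact considerably more careful than the paper's two-sentence proof, and your closing caveat---that the guarantee strictly speaking covers only the overload-driven mechanism and not the residual stochastic outages---identifies a genuine limitation that the paper leaves implicit.
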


\begin{proof}
Proposition \ref{prop:lower_bound} provides the lower bound of the probability that power system states are in the set $\mathcal{D}_{\epsilon}$ at the $k$-th cascading step. Since $X$ is a non-null set that is the intersection of convex
sets $X_s$, $s\in\mathcal{D}_{\epsilon}$, the optimal solutions to Problem (\ref{opt}) can allow for all the states in $\mathcal{D}_{\epsilon}$. This completes the proof.
\end{proof}

\section{Numerical Solver using Dykstra's Algorithm}\label{sec:num}
Essentially, Robust Optimization Problem (\ref{opt}) is the best approximation problem (BAP), which considers the projection of a point onto the intersection of a number of convex sets \cite{jef16}. Note that Dykstra's algorithm ensures the convergence of optimal solutions to the BAP \cite{boy86}. Thus, a numerical solver based on Dykstra's algorithm is developed in Table \ref{dyk} in order to find the optimal solution to Problem (\ref{opt}). Before implementing the numerical solver, it is required to specify the iteration number $N$ and initial values in each convex set $X^{(i)}$, $i\in I_{|\mathcal{D}_{\epsilon}|}$. Actually, the index $i$ of the set $X^{(i)}$ denotes the sequence number of $X_s$ in the set $\mathcal{D}_{\epsilon}$. Here $|\mathcal{D}_{\epsilon}|$ denotes the cardinality of the set $\mathcal{D}_{\epsilon}$. And the goal is to find a point in the intersection set $X$ that is closest to the point $P^0_b$. Then the initial deviation $E^{(0)}_{|\mathcal{D}_{\epsilon}|}$ is computed for the correction of projection in each set $X^{(i)}$. An iteration loop is established to implement a sequence of projections onto
each convex set and the corrections, and the loop terminates once the iteration number is reached.

\begin{table}
 \caption{\label{dyk} Numerical Solver using Dykstra's Algorithm.}
 \begin{center}
 \begin{tabular}{lcl} \hline
 \textbf{Initialize:} $N$, $P^{(0)}_{b,i}\in[\underline{P}_{b,1},\bar{P}_{b,1}]\times\cdot\cdot\cdot[\underline{P}_{b,m},\bar{P}_{b,m}]$, $\forall i\in I_{|\mathcal{D}_{\epsilon}|}$ \\
 \textbf{Goal:}  $\argmin_{P_b\in X}\|P_b-P_b^0\|$ \\ \hline
  1: ~$E^{(0)}_{|\mathcal{D}_{\epsilon}|}=P_b^0-\sum_{i=1}^{|\mathcal{D}_{\epsilon}|}P^{(0)}_{b,i}$ \\
  2: ~\textbf{for}~$k=1$~\textbf{to}~$N$ \\
  3: ~~~~~~~$E^{(k)}_0=E^{(k-1)}_{|\mathcal{D}_{\epsilon}|}$ \\
  4: ~~~~~~~\textbf{for}~$i=1$ \textbf{to} $|\mathcal{D}_{\epsilon}|$ \\
  5: ~~~~~~~~~~~~~$Z_i^{(k)}=E_{i-1}^{(k)}+P_{b,i}^{(k-1)}$  \\
  6: ~~~~~~~~~~~~~$E_i^{(k)}=P_{X^{(i)}}(Z_i^{(k)})$  \\
  7: ~~~~~~~~~~~~~$P_{b,i}^{(k)}=Z_i^{(k)}-E_i^{(k)}$ \\
  8: ~~~~~~~\textbf{end for} \\
  9: ~\textbf{end for} \\ \hline
 \end{tabular}
 \end{center}
\end{table}

\section{Simulation and Validation}\label{sec:sim}
In this section, we validate the proposed robust optimization approach for power system protection on IEEE 118 bus system. Suppose the initial malicious contingencies can be identified by power systems, and it immediately triggers the power system cascades. In the simulation, the Markov chain model enables us to determine the most probable cascading failure paths. By implementing the numerical solver in Table \ref{dyk}, the optimal solutions for load shedding and generation control are obtained to terminate the cascades at
a specified cascading step. Note that our focus is on the power system protection at the steady-state progression of cascades, where
the progression of the cascade events is slow, and the balance between the power generation and consummation is not lost \cite{lu96}.

\subsection{Parameter setting}
For simplicity, the probability of branch outage due to contingencies is identical for each branch with $p_{cont}=10^{-4}$. The probability of branch outage due to hidden failure is dependent on the location of outage branches. It is demonstrated that one branch is more likely to fail if it is connected to the outage branches through a common bus. Thus, we specify $p_{hidden}=10^{-2}$ for the branches connected to one outage branch and $p_{hidden}=10^{-4}$ for those without connecting to any outage branch. As for the probability of branch outage due to overloads $p_{over}$, we introduce a ratio $r_i$ between the actual power flow on Branch $i$ and its threshold, and $p_{over}$ for Branch $i$ is computed as follows:
$$
p_{over}=\left\{
           \begin{array}{ll}
             0, & \hbox{$r_i<0.8$;} \\
             1, & \hbox{$r_i>1.05$;} \\
             4r_i-3.2, & \hbox{otherwise.}
           \end{array}
         \right.
$$
In addition, the DC power flow equation is employed to compute the power flow on branches. Per unit values are adopted with the base value of power $100$ MVA. It is assumed that the threshold of power flow on each branch is two times larger than the normal power flow on the corresponding branch. The value of $\epsilon$ is equal to $0.1$ in order to select the most probable cascading failure paths. The iteration number $N$ in Table \ref{dyk} is $50$ to implement the numerical solver.

\subsection{Validation and discussion}
\begin{figure}\centering
 {\includegraphics[width=0.49\textwidth]{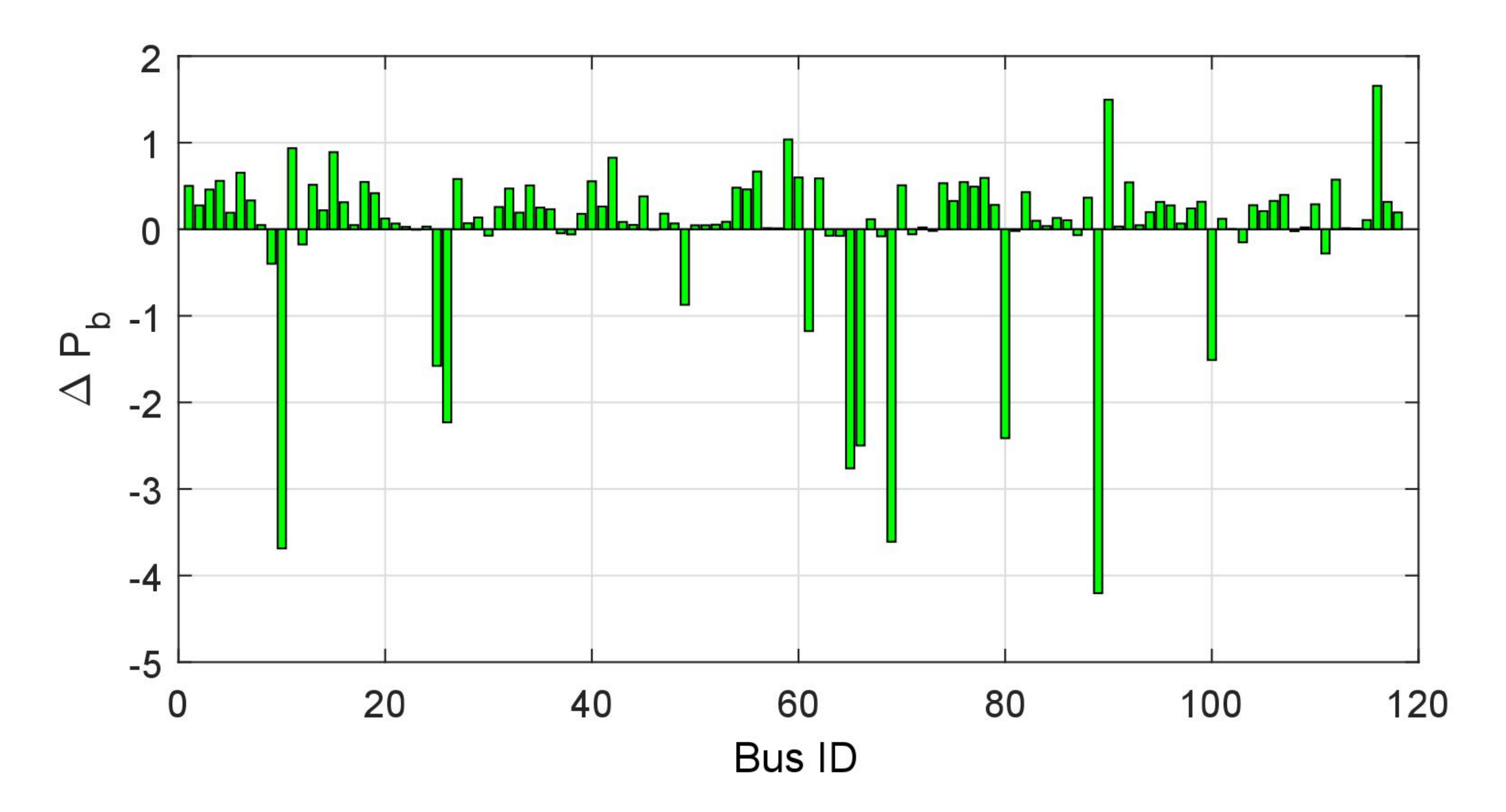}}
 \caption{\label{n50} Distribution of changes of injected power on each bus after implementing the numerical solver based on Dykstra's Algorithm in Table \ref{dyk}.}
\end{figure}

With the proposed Markov chain model and the above parameter setting, we can identify the most probable cascading paths at the first several cascading steps by severing Branch $8$ with the probability $0.6$ or severing Branch $4$ with the probability $0.4$ as the initial contingency, respectively. Thus, we can obtain two most probable cascading failure paths as follows. Path $A$: $8\rightarrow(21,36,37)\rightarrow32\rightarrow(38, 50,54)$ and Path $B$: $4\rightarrow12\rightarrow5\rightarrow37$, where each number denotes the ID of outage branch at the corresponding cascading step, and the arrow indicates the shift of cascading steps. Our goal is to terminate the further cascades at the $3$rd cascading step by implementing the solutions of numerical solver in Table \ref{dyk}. Let $\Delta P_b=P^{*}_b-P^0_b$ denote the changes of injected power on buses after taking protective actions according to the solutions of numerical solver. Figure \ref{n50} shows the distribution of $\Delta P_b$ for each bus after implementing the solver in Table \ref{dyk}. It is demonstrated that these two most probable cascading failure paths can be terminated at the $3$rd cascading step by adjusting the injected power on buses through load shedding and generation control. According to Theorem \ref{theo}, the cascades can be prevented with the probability at least $0.92$.

Actually, there is a tradeoff between the optimal adjustment of injected power on buses and the effective termination of multiple cascading failure paths. In other words, the larger changes of injected power on buses have to be made for terminating more cascading failure paths.  In the worst cases, it might be infeasible to prevent the further cascades by only adjusting the injected power on buses at a given cascading step (e.g. $X=\emptyset$). Thus, other remedial actions should be taken to protect power systems against blackouts (e.g. proactive line tripping, adjustment of branch impedance, etc).

\section{Conclusions}\label{sec:con}
This paper investigated the problem of preventing cascading blackout of power systems with uncertainties. A Markov-chain model was proposed to deal with the uncertainties and predict the cascading failure paths. In particular, a robust convex optimization problem was formulated to terminate the cascades by optimally shedding loads at the possible power system states. Moreover, an efficient numerical solver based on Dykstra's algorithm was adopted to solve the proposed robust optimization problem. In theory, we provide the lower bound for the probability of preventing the cascading blackouts of power systems. Numerical simulations were conducted to validate the proposed robust optimization approach and the theoretical lower bound on the probability of terminating the cascades. Future work may include the calibration of the proposed Markov chain model by using the statistical data on real power system cascades as well as the validation of the proposed protection scheme \cite{dob16}. In addition, more efforts will be taken on the development of efficient numerical algorithms to solve the robust optimization problem so that the cascades can be terminated in short time.

\end{document}